\theoremstyle{plain}
\newtheorem{theorem}{Theorem}
\newtheorem{lemma}[theorem]{Lemma}
\newtheorem{proposition}[theorem]{Proposition}
\theoremstyle{definition}
\newtheorem{definition}[theorem]{Definition}
\theoremstyle{remark}
\newtheorem{remark}[theorem]{Remark}
\numberwithin{theorem}{section}
\newcommand{\Z}{\mathbb Z}
\renewcommand{\P}{\mathbb P}
\newcommand{\E}{\mathbb E}
\newcommand{\pml}{\text{\sc{PML}}}
\newcommand{\snp}{\texttt{SNP}\xspace}
\newcommand{\snps}{\texttt{SNP}s\xspace}
\newcommand{\narac}{\texttt{NARAC}\xspace}
\newcommand{\maf}{\texttt{MAF}\xspace}
\newcommand{\hwe}{\texttt{HWE}\xspace}
\newcommand{\talque}{\enspace | \enspace}
\newcommand{\tab}{\enspace}
\begin{document}

\doublespacing

\title{Finding the basic neighborhood  in variable range Markov random fields:
application in  SNP association studies}
\author{Andr\'e J. Bianchi, Suely R. Giolo\\J\'ulia P. Soler and Florencia G. Leonardi}
%%\affil{Instituto de Matem\'atica e Estat\'istica, Universidade de S\~ao Paulo}
%
%\author{Suely R. Giolo}
%%\affil{Setor de Ci\^encias Exatas, Universidade Federal de Paran\'a}
%
%\author{J\'ulia Pavan Soler}
%%\affil{Instituto de Matem\'atica e Estat\'istica, Universidade de S\~ao Paulo}
%
%\author{Florencia G. Leonardi}
%%\affil{Instituto de Matem\'atica e Estat\'istica, Universidade de S\~ao Paulo}
%
%\markboth{Bianchi, A; Giolo, S.R; Pavan, J. and Leonardi, F.}{Variable range Markov random fields}
%
\date{January 10, 2013}

\maketitle

\begin{abstract}

The \snps (Single Nucleotide Polymorphisms) genotyping platforms are of great value for gene mapping of complex diseases. Nowadays, the high-density of these molecular markers enables studies of dependence patterns between loci over the genome, allowing a simultaneous inference of dependence structure and disease association.
In this paper we propose a method based on the theory of variable range Markov random fields to estimate the extent of dependence among \snps allowing variable windows along the genome. The advantage of this method is that it allows the simultaneous prediction of dependence and independence regions among \snps, without restricting \emph{a priori} the range of dependence.
We introduce an estimator based on the idea of penalized maximum likelihood to find the conditional dependence neighborhood of each \snp in the sample and we prove its consistency.
We apply our method to autosomal \snps genotypic data with unknown phase in the context of case-control association studies.
By examining rheumatoid arthritis data from the \emph{Genetic Analysis Workshop 16} (GAW16), we show the utility of the Markov model under variable range dependence.
\end{abstract}

%\begin{keywords}
%Inhomogeneous Markov random fields; variable range models; penalized maximum likelihood criterion; Bayesian information criterion; SNPs.
%\end{keywords}

%%%%%%%%%%%%%%%%%%%%%%%%%%%%%%%%%%%%%%%%%%%%%%%%%%%%
\section{Introduction}
%%%%%%%%%%%%%%%%%%%%%%%%%%%%%%%%%%%%%%%%%%%%%%%%%%%%

Genome-wide genetic mapping studies based on linkage disequilibrium (LD) have been encouraged from the availability of high-density \snps (Single Nucleotide Polymorphisms) genotyping platforms. Since the individual \snp effect is expected to be small, a challenge has been to find  blocks of \snps with effect on diseases. This problem requires
the consideration of   a model of the dependence structure among loci over the genome, see for example    \citet{akey2001,greenspan2005,browning2006,kim2008}.

In this paper we propose a method based on the theory of variable range Markov random fields to estimate the extent of dependence among \snps allowing variable windows along the genome.
In order to infer the range of dependence for each \snp, we propose a criterion based on penalized maximum likelihood.
The main theoretical contribution  of this paper is the proof of the consistency of this estimator, as the sample size diverges.

As a consequence of our model, we show how the dependence information inferred can be used to construct independent blocks of \snps that can be associated  with a response variable in the context of  case-control studies.
Tools such as the Chi-square statistic are adopted to study the association of each block and the response variable. The main advantage of this method is that it allows, without restriction of range, the simultaneous prediction of dependence and independence regions among \snps.

Methods based on penalized  maximum likelihood to infer the range of dependence of Markov random fields have  been proposed in the literature before, see for example \citet{csiszar2006b,orlandi2011}. The main difference between these approaches and our is that these methods assume a fixed (and symmetric) neighborhood for each variable, because they based the inference procedure on only one sample of the process.
In our context, we have at hand several independent realizations of the  process, corresponding to different individuals in the sample,  and this allows us to assume an inhomogeneous process with different neighborhoods for each variable.

In order to illustrate the potential of our method to infer variable dependence structures and disease association we will consider rheumatoid arthritis data from the GAW16.  Although our methodology is driven to \snp genotypic sequence, extension for allele sequences is also possible.

The paper is organized as follows. In Section~\ref{sec:model} we define the variable range Markov random field, introduce the criterion to estimate the range of dependence for each variable and state the consistency result.
In Section~\ref{sec:snps} we show how to apply our method to infer the dependence structure in \snps maps
and to study association with a response variable in a case-control study. Finally, in Section~\ref{sec:proof}
we prove the main theoretical result of this paper.

%%%%%%%%%%%%%%%%%%%%%%%%%%%%%%%%%%%%%%%%%%%%%%%%%%%%
\section{Variable range Markov random fields}\label{sec:model}
%%%%%%%%%%%%%%%%%%%%%%%%%%%%%%%%%%%%%%%%%%%%%%%%%%%%

Let $A$ denote a finite alphabet and let
$\P$ be a probability distribution over $A^\Z$, equiped with the usual $\sigma$-algebra
generated by the cylinder sets.
Given an element $i\in\Z$, called a \emph{site}, we will denote by $X_i$
the marginal random variable obtained by the canonical projection of elements in $A^\Z$.
Given two non-negative integers $l$ and $r$, we will denote by $\Delta^i_{l,r}$ the union of the two integer intervals to the left and to the right of site $i$ of length $l$ and $r$, respectively; that is $\Delta^i_{l,r} = \{i-l,\dotsc, i-1,i+1,i+r\}$. We will also denote by $X_{\Delta^i_{l,r}}$ the random vector $\{X_j\colon j\in \Delta^i_{l,r}\}$ and similarly
$x_{\Delta^i_{l,r}}$ will denote an element of $A^{\Delta^i_{l,r}}$.

We say  $\P$ is a \emph{variable range} Markov random field
on $A$ if for any $i\in\Z$ there exist two integers $l_i$ and $r_i$
such that for all $L\geq l_i$ and all $R\geq r_i$ we have
 \begin{equation}\label{probs}
  \P(X_i = x_i \talque X_k = x_k, \tab k \in \Delta^i_{L,R}) \; =\; \P(X_j = x_j \talque X_k = x_k, \tab k \in \Delta^{i}_{l_i,r_i})\,,
\end{equation}
for all $x_i\in A$ and all $x_{\Delta^i_{L,R}}$ for which  $\P(X_{\Delta^i_{L,R}} = x_{\Delta^i_{L,R}}) > 0$.

Observe that the size of the neighborhood $\Delta^i_{l_i,r_i}$ may depend on the specific site $i$, for that reason we call our model a variable range Markov random field.
\begin{remark}\label{min}
If $(l_i,r_i)$ satisfies equation (\ref{probs}) then any pair $(l,r)$ with $l\geq l_i$ and $r\geq r_i$ will also satisfy equation (\ref{probs}). For this reason in the sequel we assume $l_i$ and $r_i$ are the minimal integers satisfying (\ref{probs}), calling the set
$\Delta^i_{l_i,r_i}$ the \emph{basic neighborhood} of site $i$.
\end{remark}

\begin{remark}\label{equal_neig}
Note that if $X_i$ and $X_j$ are not conditionally independent given all the remaining variables then $[i;j] \subset \Delta^i_{l_i,r_i}\cap \Delta^j_{l_j,r_j}$, where $[i;j]$ denotes the integer interval $\{i,i+1,\dotsc, j-1,j\}$. On the other hand, if there exists $\ell$ such that $r_i \leq \ell-i$ for any $i\leq \ell$ and  $l_j \leq j- \ell$ for any $j> \ell$,  then
$X_i$ is independent of $X_j$ for any $i\leq \ell$ and any $j>\ell$.
\end{remark}

In what follows we will focus on the problem of identifying the  basic neighborhood of a given site $i\in\Z$. Without loss of generality we will take $i=0$
and we will simply write $\Delta^0_{l_0,r_0} = \Delta_{l_0,r_0}$.
We  will assume we have an independent sample of  size $n$ of $(X_{-L_0},\dotsc,X_0,\dotsc,  X_{R_0})$, with $L_0\geq l_0$ and $R_0\geq r_0$. We will denote by $x^{(i)}_{j}$ the value taken by  the $j$-th variable in the $i$-th observation. Our goal is to estimate
the basic  neighborhood $\Delta_{l_0,r_0}$ (by estimating the parameters $l_0$ and $r_0$) and the conditional probabilities given by (\ref{probs}), based on this sample.

Given two sequences $w = (w_{-l},\dotsc,w_{-1}) \in A^l$ and $v= (v_1,\dotsc,v_r) \in A^r$ and a symbol $a\in A$ we will denote by $p(a|w,v)$ and $p(w,a,v)$ the conditional  (respectively joint) probability given by
\[
p(a|w,v) =  \P(X_0 = a \talque X_{-l:-1}=w, X_{1:r}=v)
\]
and
\[
p(w,a,v) =  \P(X_0 = a, X_{-l:-1}=w, X_{1:r}=v)\,,
\]
where  $X_{i:j}$ represents the sequence $X_i,\dotsc, X_{j}$.
The operator $N_n(w,a,v)$ will denote the number of occurrences of the event
\[
\{X_{-l:-1}=w\}\cap \{X_0=a\}\cap \{X_{1:r}\}
\]
in the sample. That is
\[
N_n(w,a,v) \;=\; \sum_{i=1}^n\mathbf{1}\{x^{(i)}_{-l:r}=wav\},
\]
where $wav$ is the concatenation of $w$, $a$ and $v$; that is
$wav=(w_{-l},\dotsc,w_{-1},a,v_1,\dotsc,v_r)\in A^{l+r+1}$.
Given $w$ and $v$, the maximum likelihood estimator of the conditional distribution $\{p(\cdot|w,v)\colon a\in A\}$
is given by
\begin{equation}\label{trans}
\hat p_n(a|w,v) = \frac{N_n(w,a,v)}{N_n(w,v)},\qquad \text{ for }a\in A\,,
\end{equation}
where $N_n(w,v) = \sum_{a\in A} N_n(w,a,v)$. If  $N_n(w,v)=0$ we adopt the convention $\hat p_n(a|w,v) = 1/|A|$ for all $a\in A$.

For any pair of integers $(l,r)$, with $l\leq L_0$ and $r\leq R_0$ we denote by
\begin{equation}\label{hatp}
\hat \P_{l,r}(x_0^{(1:n)}| x_{\Delta_{l,r}}^{(1:n)}) = \prod_{w\in A^l}\prod_{v\in A^r}\prod_{a\in A} \hat p_n(a|w,v)^{N_n(w,a,v)} \,.
\end{equation}

In order to estimate the neighborhood $\Delta_{l_0,r_0}$; that is, to estimate $l_0$ and $r_0$,  we propose to use a penalized  maximum (conditional) likelihood criterion.

\begin{definition}\label{def:est}
Given a constant $c>0$, the empirical neighborhood  of  site 0 is the set of indices
$\Delta_{\hat l_n,\hat r_n} = \{-\hat l_n,\dotsc,-1,+1,\hat r_n\}$,
where
\begin{equation}\label{argmax}
(\hat l_n,\hat r_n) \;=\; \underset{0\leq l\leq L_0,0\leq r\leq R_0}{\arg\max}\bigl\{ \,\log \hat\P_{l,r}(x_0^{(1:n)}| x_{\Delta_{l,r}}^{(1:n)})  - c\,|A|^{l+r}\log_{|A|} n \,\bigr\}\,.
\end{equation}
\end{definition}

We prove the following consistency result for the neighborhood estimator.

\begin{theorem}\label{main}
The estimator given by  (\ref{argmax})  satisfies $(\hat l_{n},\hat r_{n}) = (l_0,r_0)$ and
 therefore $\Delta^0_{\hat l_n,\hat r_n}=\Delta^0_{l_0,r_0}$ eventually almost surely
as $n\to \infty$.
\end{theorem}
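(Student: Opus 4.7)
The plan is to prove, for each fixed pair $(l, r) \in \{0, \ldots, L_0\} \times \{0, \ldots, R_0\}$ with $(l, r) \neq (l_0, r_0)$, that the penalized criterion in~(\ref{argmax}) evaluated at $(l_0, r_0)$ eventually strictly exceeds its value at $(l, r)$ almost surely. Because the search set is finite, a finite union bound over the bad events then yields $(\hat l_n, \hat r_n) = (l_0, r_0)$ eventually almost surely. I would divide the argument into an \emph{underestimation} case ($l < l_0$ or $r < r_0$), in which the log-likelihood difference grows linearly in $n$ and dominates the penalty, and an \emph{overestimation} case ($l \geq l_0$, $r \geq r_0$, $(l, r) \neq (l_0, r_0)$), in which the penalty difference dominates a slow-growing log-likelihood difference.

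In the underestimation case, first note that $\Delta_{l \vee l_0, r \vee r_0}$ contains $\Delta_{l_0, r_0}$ and hence is a Markov neighborhood of site $0$ yielding the same conditional distribution, while $\Delta_{l, r}$ itself is \emph{not} a Markov neighborhood by Remark~\ref{min}. The strong law of large numbers applied to the i.i.d.\ sample gives $\hat p_n(a \mid w, v) \to p(a \mid w, v)$ almost surely whenever $p(w, v) > 0$, so
\[
  \tfrac{1}{n}\log \hat\P_{l, r}(x_0^{(1:n)} \mid x_{\Delta_{l, r}}^{(1:n)}) \;\longrightarrow\; -H(X_0 \mid X_{\Delta_{l, r}}) \qquad \text{a.s.}
\]
Because conditioning on a larger set of variables never increases Shannon entropy, and the reduction is strict precisely when $\Delta_{l, r}$ fails to carry all the conditional information about $X_0$, one obtains the strict inequality $H(X_0 \mid X_{\Delta_{l, r}}) > H(X_0 \mid X_{\Delta_{l \vee l_0, r \vee r_0}}) = H(X_0 \mid X_{\Delta_{l_0, r_0}})$. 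Therefore $\log \hat\P_{l_0, r_0} - \log \hat\P_{l, r}$ grows linearly in $n$ a.s., whereas the penalty difference $c(|A|^{l_0 + r_0} - |A|^{l + r})\log_{|A|} n$ is only $O(\log n)$, and the criterion at $(l_0, r_0)$ wins eventually a.s.

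In the overestimation case both $\Delta_{l, r}$ and $\Delta_{l_0, r_0}$ are Markov, so $p(a \mid w, v) = p(a \mid w_{-l_0:-1}, v_{1:r_0})$ for every admissible triple; the log-likelihood difference is non-negative but expected to grow only very slowly. To quantify, I would use Hoeffding's inequality cell-by-cell together with Borel--Cantelli (or, more sharply, the law of the iterated logarithm) to obtain $|\hat p_n(a \mid w, v) - p(a \mid w, v)| = O(\sqrt{(\log \log n)/n})$ almost surely for every $(w, a, v)$ with $p(w, v) > 0$; cells with $p(w, v) = 0$ are never populated and contribute nothing. A second-order Taylor expansion of $\log$ around the common value $p(a \mid w, v)$, in which the first-order terms cancel because $\sum_{a} \hat p_n(a \mid \cdot) = 1$ for each empirical conditional distribution, then gives
\[
  \log \hat\P_{l, r} - \log \hat\P_{l_0, r_0} \;=\; O(\log \log n) \qquad \text{a.s.},
\]
while the penalty difference $c(|A|^{l + r} - |A|^{l_0 + r_0})\log_{|A|} n$ is strictly positive and of order $\log n$ and therefore dominates. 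The main obstacle is precisely this last step: upgrading the $O_{\P}(1)$ bound that a Wilks-type expansion would give in distribution to an almost-sure $O(\log \log n)$ bound, which is what forces the use of an LIL (or a summable concentration-plus-Borel--Cantelli argument) over the finite family of cells.
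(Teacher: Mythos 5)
Your proposal is correct in substance and shares the paper's skeleton (a finite union bound over the grid, an underestimation case won by a linear-in-$n$ likelihood gap, an overestimation case won by the penalty against a slowly growing likelihood-ratio term), but it differs in two worthwhile ways. First, in the underestimation case you treat \emph{all} pairs with $l<l_0$ or $r<r_0$ uniformly, comparing $H(X_0\mid X_{\Delta_{l,r}})$ with $H(X_0\mid X_{\Delta_{l\vee l_0,\,r\vee r_0}})=H(X_0\mid X_{\Delta_{l_0,r_0}})$; the paper instead splits this case into $l\le l_0,\ r\le r_0$ (handled by the SLLN and the log-sum inequality) and a mixed case $l>l_0,\ r<r_0$ handled by an overestimation-style comparison of $(l,r)$ against $(l_0,r)$ and then chaining. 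Your route is arguably cleaner, since the paper's mixed-case step implicitly uses that $p(a\mid w_{-l:-1},v_{1:r})$ depends only on $(w_{-l_0:-1},v_{1:r})$ when $l>l_0$ but $r<r_0$, a property not immediate from (\ref{probs}); the entropy/superset comparison avoids this. Do note, though, that the strictness of your entropy inequality needs the same small argument the paper also leaves implicit: equality would give $\P(X_0\in\cdot\mid X_{\Delta_{l\vee l_0,r\vee r_0}})=\P(X_0\in\cdot\mid X_{\Delta_{l,r}})$, and one must then push this, via the tower property through $(L\vee l_0,R\vee r_0)$, to all $L\ge l$, $R\ge r$ in order to contradict the minimality of Remark~\ref{min}. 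Second, in the overestimation case both arguments rest on the law of the iterated logarithm; you keep the sharp a.s.\ rate $|\hat p_n-p|=O(\sqrt{(\log\log n)/n})$ and Taylor-expand the log-likelihood ratio to get an $O(\log\log n)$ bound beaten by the $c(|A|^{l+r}-|A|^{l_0+r_0})\log_{|A|}n$ penalty for every $c>0$, whereas the paper deliberately weakens the LIL to the bound $\sqrt{\delta\log n/N_n(w,v)}$ of Proposition~\ref{prop:typicality}, converts the likelihood ratio into a sum of Kullback--Leibler terms bounded via Lemma~\ref{KLTV}, and then chooses $\delta<c\,p_{\min}(|A|-1)/|A|^2$; the two are essentially the same estimate packaged differently. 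One caveat on your parenthetical alternatives: plain Hoeffding plus Borel--Cantelli only yields $O(\sqrt{(\log n)/n})$ deviations, hence an $O(\log n)$ likelihood-ratio term with a constant not under your control, which does not suffice for an arbitrary $c>0$; the LIL (or some other $o(\sqrt{(\log n)/n})$ almost-sure rate) really is the needed ingredient, exactly as in the paper.
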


The proof of Theorem~\ref{main} is given in Section~\ref{sec:proof}.

%%%%%%%%%%%%%%%%%%%%%%%%%%%%%%%%%%%%%%%%%%%%%%%%%%%%
\section{Variable dependence windows for \snps maps and disease association}\label{sec:snps}
%%%%%%%%%%%%%%%%%%%%%%%%%%%%%%%%%%%%%%%%%%%%%%%%%%%%

In this paper we model data from 2,062 individuals (72.4\% of them females) in which 868 are affected by rheumatoid arthritis (cases) and 1,194 are not affected (controls). This data is the initial batch of whole genome association data for the North American Rheumatoid Arthritis Consortium (\narac) provided by the Genetic Analysis Workshop 16 (GAW16). For all individuals in this case-control study, information from 545,080 \snp-genotype from the Illumina 550K chip are available. The genotypes are in the format X$_{-}$X, where X is a base (A,C,G,T). Each record has information about \snp name, chromosome, and \snp position in basepairs. Only genotypes from the 22 autosomal chromosomes were used in our analysis. For each \snp the scores 0, 1 and 2 were assigned for their three possible \snp genotypes. Thus, for instance, a \snp in which their genotypes are GG, AA, and GA, score 0 was assigned for the homozygote with highest frequency, score 1 for the heterozygote, and score 2 for the homozygote with lowest frequency.

In order to remove potential genotype errors, we excluded those \snps with minor allele frequency (\maf) lower than 1\% as well as those not in Hardy-Weinberg equilibrium (\hwe). The \hwe was checked by using the Chi-square test available in the genetics library of the R pa\-cka\-ge \citep{R2005}.
 The significance level considered was 10$^{-4}$. At~the end, a total of 43,616 \snps were removed with 501,464 remaining for the analysis. No procedure was used to impute the missing genotypes that remained in the data set.

 We apply the model and estimators described in Section~\ref{sec:model} to this data set.
We obtained in this way neighborhoods for each one of the $501,464$
\snps.  Considering the entire data set we obtained neighborhoods with mean
$2.22418$ \snps and
standard deviation $0.714481$ \snp. The mean size of the left part of the
neighborhoods was $1.11432$ and that of the right part was $1.10985$ \snp.

\begin{figure}
\centering\includegraphics[width=14cm]{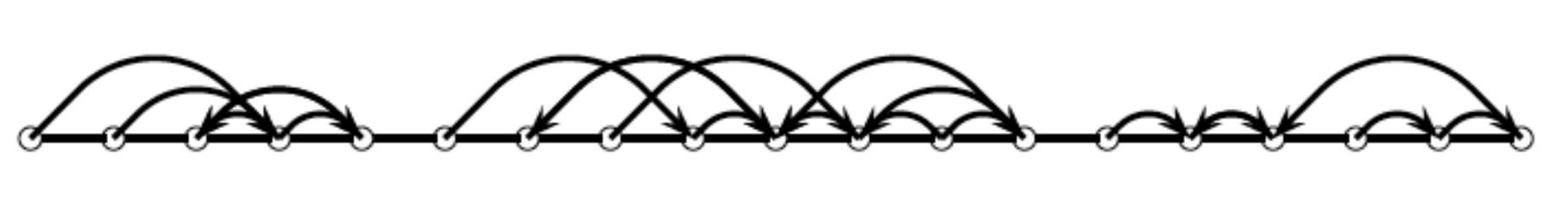}
  \caption{Left and right neighborhoods for the first 19 \snps in the sample.}
  \label{fig:intervals}
\end{figure}

An interesting property of the estimated neighborhoods can be observed
in Figure~\ref{fig:intervals}, where we represent the length of the left an right
parts of each neighborhood by an arrow.
In this representation we can see some  regions (between two adjacent \snps) that are not crossed by
any arrow; that is by the neighborhoods of the adjacent \snps. This points divide the set
of \snps into (probabilistically) independent blocks of different sizes (see Remark~\ref{equal_neig}).
We illustrate how these blocks are obtained in Figure~\ref{fig:blocks}. From now on we will called these independent blocks of ``influence windows''.

\begin{figure}
\centering
  \includegraphics[width=10cm]{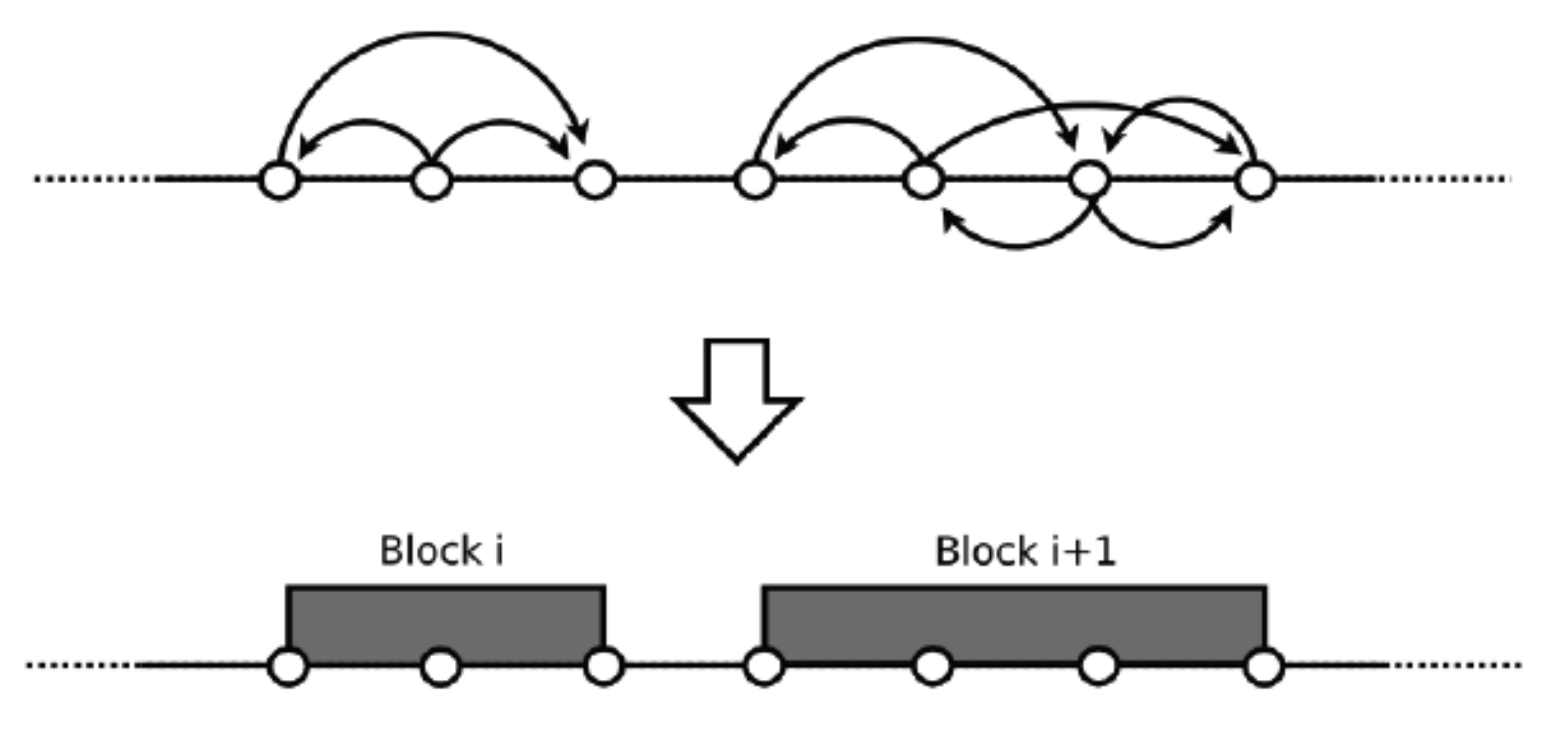}
  \caption{Independent blocks obtained by the identification of intermediate points not included in any neighborhood.}
  \label{fig:blocks}
\end{figure}

Analyzing the neigborhoods previously determined by the algorithm we obtained a total of  $48,697$ influence windows.
The mean size of these influence windows was $10.27$ \snps, the smallest window has only $1$ \snp and the biggest one has $83$ \snps.
The standard deviation of the sizes of the influence windows was $5.94$ \snps.

In order to test the association of every influence window with the rheumatoid arthritis, we perform a Chi-square test of independence between the observed genotype frequencies in that window and the response variable indicating the presence/absence of the disease.    In Figure~\ref{fig:chi} we show the scores, corresponding to minus the logarithm
in base 10 of the $p$-value,  calculated for each window in the 22 autosomes. We can observe a region of high association in the sixth chromosome, where 22 windows had a $p$-value smaller that $10^{-16}$.
These results are compatible with previous studies about  rheumatoid arthritis, as for example \citet{irigoyen2005,amos2009}.
We can observe also some windows in other chromosomes  that exhibited a small $p$-value and can therefore be associated with the disease. A list with the influence windows that had a $p$-value smaller that $10^{-4}$, as well as
the program written in C to perform this analysis is available and can be requested from the authors.

\begin{figure}
\centering
  \includegraphics[width=15cm]{./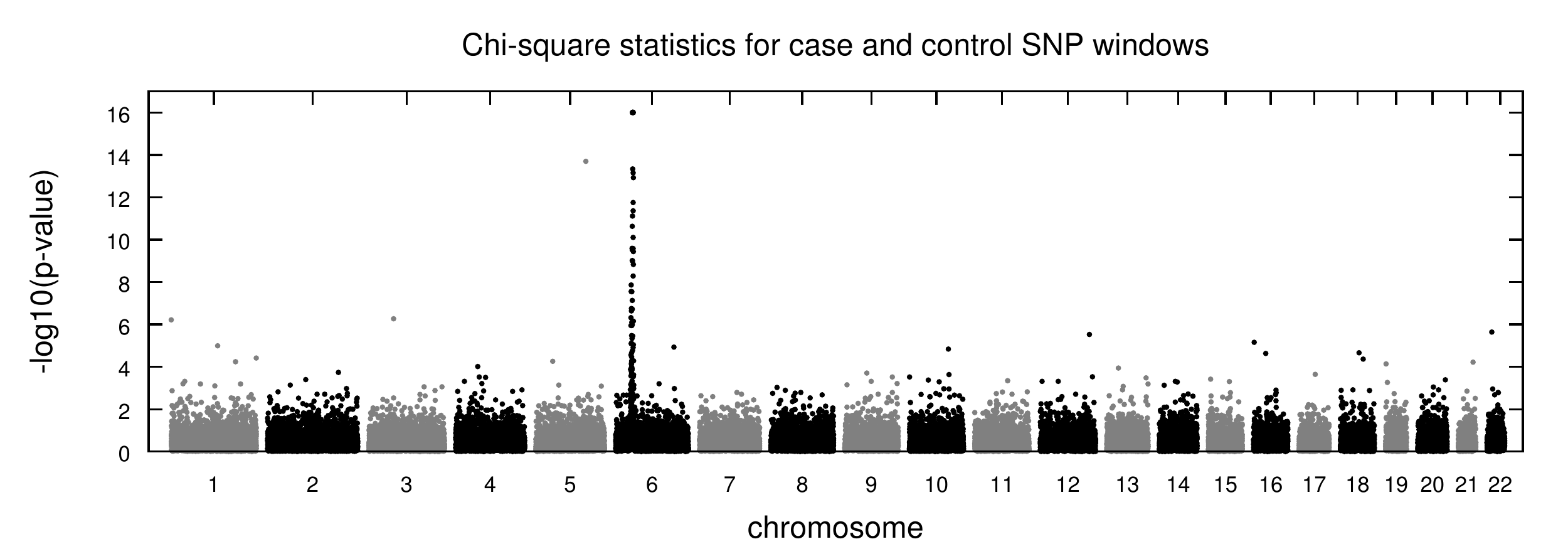}
  \caption{Chi-square statistics for each influence window in the 22 autosomes.}
  \label{fig:chi}
\end{figure}

%%%%%%%%%%%%%%%%%%%%%%%%%%%%%%%%%%%%%%%%%%%%%%%%%%%%%%%%
\section{Proof of Theorem~\ref{main}}\label{sec:proof}
%%%%%%%%%%%%%%%%%%%%%%%%%%%%%%%%%%%%%%%%%%%%%%%%%%%%%%%%

 We begin by giving some basic definitions and stating some results that will be useful in the proof of Theorem~\ref{main}.  From now on we simply write $\log$ for the logarithm in base $|A|$.

\begin{definition}\label{KL}
The \emph{K\"ullback-Leibler divergence}
between the two probability distributions $P$ and $Q$ over $A$
 is defined by
 $$D(P;Q) = \sum_{a\in A} P(a)\log\frac{P(a)}{Q(a)}$$
where, by convention, $P(a) \log\frac{P(a)}{Q(a)}=0$ if $P(a)=0$ and $P(a) \log\frac{P(a)}{Q(a)}=+\infty$ if $P(a) > Q(a)=0$.
\end{definition}

The following lemma was taken from \citet[Lemma~6.3]{csiszar2006}. We include it here for completeness, but we omit its proof.
\begin{lemma}\label{KLTV}
For any two probability distributions  $P$ and $Q$ over $A$ we have
\[
D(P;Q)  \;\leq\; \sum_{a\in A\colon Q(a)>0} \frac{[P(a)-Q(a)]^2}{Q(a)}\,.
\]
\end{lemma}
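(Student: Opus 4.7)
The plan is to reduce the inequality to the elementary estimate $\log(1+t) \le t$ for $t > -1$ and a short algebraic identity that converts the resulting bound into the chi-square--type right-hand side.

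First a remark on the conventions in Definition~\ref{KL}: if there exists $a \in A$ with $P(a) > 0 = Q(a)$, then $D(P;Q) = +\infty$ while the right-hand side is finite, so the inequality as stated is only meaningful under $P \ll Q$ (that is, $P(a) = 0$ whenever $Q(a) = 0$). Under this assumption, both sides effectively range over the index set $\{a : Q(a) > 0\}$, and for every $a$ in this set one has $P(a)/Q(a) = 1 + t_a$ with $t_a = (P(a) - Q(a))/Q(a) \ge -1$.

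Applying $\log(1+t_a) \le t_a$ pointwise, multiplying by $P(a) \ge 0$, and summing then yields
\[
D(P;Q) \;=\; \sum_{a : Q(a) > 0} P(a)\log\!\left(1 + \frac{P(a)-Q(a)}{Q(a)}\right) \;\le\; \sum_{a : Q(a) > 0} \frac{P(a)\,[P(a)-Q(a)]}{Q(a)}.
\]
The final step is the elementary identity
\[
\frac{P(a)\,[P(a)-Q(a)]}{Q(a)} \;-\; \frac{[P(a)-Q(a)]^2}{Q(a)} \;=\; \frac{[P(a)-Q(a)]\,Q(a)}{Q(a)} \;=\; P(a) - Q(a),
\]
whose sum over $\{a : Q(a) > 0\}$ vanishes, since $\sum_a P(a) = \sum_a Q(a) = 1$ and $P(a) = 0$ on the complement under $P \ll Q$. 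Combining the two displays gives exactly the bound claimed in the lemma.

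The only point requiring care is the convention/absolute-continuity issue noted at the start: taken literally, the inequality fails when $P \not\ll Q$, so one must either implicitly assume $P \ll Q$ or interpret the bound modulo that boundary case (which is harmless in every application of the lemma to empirical distributions appearing in Section~\ref{sec:proof}, since a symbol unseen under one MLE is also unseen under the other). Once this convention is settled, the rest of the argument is a single application of $\log(1+t) \le t$ followed by a trivial algebraic rearrangement, so there is no substantive obstacle.
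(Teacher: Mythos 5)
The paper does not actually prove this lemma --- it is quoted from \citet[Lemma~6.3]{csiszar2006} and its proof is explicitly omitted --- so there is no in-paper argument to compare yours against. On its own merits, your argument is the standard one and is essentially correct: the pointwise bound $\log(1+t)\le t$ gives $D(P;Q)\le\sum_{a\colon Q(a)>0}P(a)[P(a)-Q(a)]/Q(a)$, and your algebraic identity shows this sum actually \emph{equals} the chi-square expression, since the difference of summands is $P(a)-Q(a)$, which sums to zero under $P\ll Q$. Your absolute-continuity caveat is also well taken: with the stated conventions the left side is $+\infty$ when $P\not\ll Q$ while the right side is finite, and in the paper's only use of the lemma ($P$ an empirical conditional distribution, $Q$ the true one) $P\ll Q$ holds almost surely, so the restriction is harmless.

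There is one genuine gap you should repair. Section~\ref{sec:proof} declares that $\log$ means the logarithm in base $|A|$, and the estimate $\log(1+t)\le t$ is specific to the \emph{natural} logarithm: for bases larger than $e$ it fails for $t\in(-1,0)$ (e.g.\ $\log_4(0.6)\approx-0.37>-0.4$), and for base $2$ it fails for small $t>0$ since $\log_2(1+t)\approx t/\ln 2>t$. The clean fix is to run your argument with $\ln$ and then divide by $\ln|A|$; for $|A|\ge3$ this factor exceeds $1$ and only strengthens the bound, so the lemma follows. For $|A|=2$, however, the statement as literally written is false: with $P=(1,0)$ and $Q=(0.99,0.01)$ one has $D(P;Q)=\log_2(1/0.99)\approx0.0145$ while the right-hand side is $\approx0.0101$. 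So the lemma should be read with the natural logarithm (as in the cited source), or restricted to $|A|\ge3$; neither affects the application in this paper, where $|A|=3$.
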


Now we prove a result showing an upper bound for the deviation of the empirical conditional probabilities
from their true values.

\begin{proposition}\label{prop:typicality}
For any $\delta>0$ and for any triple $(w,a,v)\in A^{l+r+1}$ with $0 \leq l\leq L_0$ and $0\leq r\leq R_0$ we have
\[
\bigl| \hat p_n(a|w,v) - p(a|w,v) \bigr|\;<\; \sqrt{\frac{\delta\log n}{N_n(w,v)}}
\]
eventually almost surely as $n\to\infty$.%, where $p(w,v)=\sum_{a'\in A}p(w,a',v)$.
\end{proposition}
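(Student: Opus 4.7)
The plan is to leverage the Binomial conditional structure of the estimator and apply Hoeffding's inequality together with Borel--Cantelli. The key observation is that, conditional on $N_n(w,v) = k$, the count $N_n(w,a,v)$ is $\mathrm{Binomial}(k, p(a|w,v))$: the $n$ sample realizations are i.i.d., and for any index $i$ at which $x^{(i)}_{-l:-1}=w$ and $x^{(i)}_{1:r}=v$, the indicator $\mathbf{1}\{x^{(i)}_0 = a\}$ is a Bernoulli variable with parameter $p(a|w,v)$ by the very definition of conditional probability.

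I would first dispatch the degenerate case $p(w,v) = 0$: here $N_n(w,v) = 0$ almost surely, the estimator is set to $1/|A|$ by the convention after (\ref{trans}), and the right-hand side of the stated inequality is $+\infty$, so the claim holds trivially. In the main case $p(w,v) > 0$, the strong law of large numbers applied to the i.i.d.\ indicators $\mathbf{1}\{x^{(i)}_{-l:-1} = w,\,x^{(i)}_{1:r} = v\}$ guarantees $N_n(w,v) \to \infty$ almost surely, so the estimator is well-defined for large $n$. Hoeffding's inequality then gives
\[
\P\bigl(|\hat p_n(a|w,v) - p(a|w,v)| \geq \epsilon \,\big|\, N_n(w,v) = k\bigr) \;\leq\; 2\exp(-2k\epsilon^2),
\]
and substituting $\epsilon = \sqrt{\delta \log n / k}$ collapses the right-hand side to $2\exp(-2\delta \log n) = 2\,n^{-2\delta/\ln|A|}$, uniformly in $k$. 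Taking expectation in $N_n(w,v)$ preserves this bound.

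For $\delta > (\ln|A|)/2$ the series $\sum_n n^{-2\delta/\ln|A|}$ converges and Borel--Cantelli immediately yields the desired ``eventually almost surely'' conclusion. To extend to arbitrary $\delta > 0$ as the proposition claims, I would appeal to the law of the iterated logarithm applied to the i.i.d.\ Bernoulli subsequence indexed by the occurrences of the pattern $(w,v)$: this yields $|\hat p_n(a|w,v) - p(a|w,v)| = O\bigl(\sqrt{\log\log N_n(w,v)/N_n(w,v)}\bigr)$ almost surely, and since $\log\log N_n(w,v) \leq \log\log n = o(\log n)$, the stated inequality holds eventually for every $\delta > 0$. The main technical obstacle is exactly this last step: the elementary Hoeffding plus Borel--Cantelli calculation has a summability threshold in $\delta$ that depends on $|A|$, and closing the gap to arbitrary $\delta > 0$ requires the sharper LIL (or an equivalent dyadic-subsequence large-deviation argument). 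The Binomial reduction and the Hoeffding tail estimate themselves are entirely routine.
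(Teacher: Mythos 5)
Your argument is correct, and it reaches the conclusion by a decomposition different from the paper's, even though the decisive tool coincides. The paper works with the full-sample centered sum $Z_n = N_n(w,a,v) - p(a|w,v)\,N_n(w,v)$, applies the law of the iterated logarithm to the i.i.d.\ variables $Y_i$ built from all $n$ observations, and then needs the strong law of large numbers (via $n/N_n(w,v)\to 1/p(w,v)$) to convert the resulting $\sqrt{n\log\log n}$ bound into one expressed in terms of $N_n(w,v)$, before absorbing $\log\log n = o(\log n)$ exactly as in your last step. You instead exploit the conditional binomial structure: given the occurrence times of the pattern $(w,v)$, the central symbols form an i.i.d.\ Bernoulli$(p(a|w,v))$ sequence, and the LIL along this subsequence gives $|\hat p_n(a|w,v)-p(a|w,v)| = O\bigl(\sqrt{\log\log N_n(w,v)/N_n(w,v)}\bigr)$ directly, so the SLLN ratio step disappears (you only need $N_n(w,v)\to\infty$ and $\log\log N_n(w,v)\le \log\log n$). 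Your preliminary Hoeffding--Borel--Cantelli computation is a correct, more elementary alternative, but as you rightly diagnose it only covers $\delta > (\ln|A|)/2$ (the threshold reflecting the base-$|A|$ logarithm), so the LIL is genuinely needed for arbitrary $\delta>0$; the paper simply invokes the LIL from the start. Two small merits of your version: you dispose of the degenerate case $p(w,v)=0$ explicitly, which the paper leaves implicit, and you track the base-$|A|$ convention correctly when computing the summability threshold. To make the subsequence step fully rigorous, add one line stating that the Bernoulli variables at the occurrence times are i.i.d.\ and independent of those times, and that evaluating the eventual LIL bound at the random index $N_n(w,v)$ is legitimate because the bound holds for all sufficiently large deterministic indices almost surely and $N_n(w,v)\to\infty$ almost surely.
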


\begin{proof}
Define, for fixed $(w,a,v)\in A^{l+r+1}$,  the random variables
\[
Y_i = \mathbf{1}\{x^{(i)}_{-l:r}=wav\} - p(a|w,v) \mathbf{1}\{x^{(i)}_{-l:-1}=w\}\mathbf{1}\{x^{(i)}_{1:r}=v\}\, ,\quad i=1,2,\dotsc,n
\]
and
\begin{equation}\label{mart}
Z_n\; =\; \sum_{i=1}^n Y_i\; =\; N_n(w,a,v) - p(a|w,v)N_n(w, v)\,.
\end{equation}
The variables $\{Y_i\colon i=1,2,\dotsc,n\}$ are i.i.d and a direct calculation gives, for any $i=1,\dotsc,n$,  $\E(Y_i)=0$
and
\[
\E(Y_i^2)\,=\,p(a|w,v)(1-p(a|w,v))p(w,v)\,\leq\, \frac{p(w,v)}{4}\,,
\]
where $p(w,v)=\sum_{a'\in A}p(w,a',v)$.
Now, by the Law of the Iterated Logarithm we have that for any $\epsilon>0$
\[
|Z_n| \;<\; (1+\epsilon) \frac{p(w,v)}{4} \sqrt{2n\log\log n}
\]
eventually almost surely as $n\to\infty$. In particular we have
\[
|Z_n| \;<\; \sqrt{2p(w,v)^2n\log\log n}
\]
eventually almost surely as $n\to\infty$. Dividing both sides of the inequality by $N_n(w,v)$
we obtain that
\[
\bigl| \hat p_n(a|w,v) - p(a|w,v) \bigr|\;<\; \sqrt{\frac{2p(w,v)^2n\log\log n}{N_n(w,v)^2}}\,.
\]
By the Strong Law of Large Numbers we have that $n/N_n(w,v)\to 1/p(w,v)$
almost surely, therefore  we have
\[
\bigl| \hat p_n(a|w,v) - p(a|w,v) \bigr|\;<\; \sqrt{\frac{4p(w,v)\log\log n}{N_n(w,v)}}
\]
eventually almost surely as $n\to\infty$.
Now, for any $\delta>0$ we have that
\[
 4p(w,v)\log\log n <  \delta  \log n
\]
eventually as $n\to\infty$,  and this concludes the proof of Proposition~\ref{prop:typicality} .
\end{proof}

%%%%%%%%%%%%%%%%%%%%%%%%%%%%%%%%%%%%%%%%%
\begin{proof}[Proof of Theorem~\ref{main}]
%%%%%%%%%%%%%%%%%%%%%%%%%%%%%%%%%%%%%%%%%

Denote by
\[
\pml_{l,r}(x_0^{(1:n)}| x_{\Delta_{l,r}}^{(1:n)})  \;=\;\log \hat\P_{l,r}(x_0^{(1:n)}| x_{\Delta_{l,r}}^{(1:n)})  - c|A|^{l+r}\log n\,.
\]

We will divide the proof in two cases.

\noindent\emph{(a) Overestimation}.
We have to prove that simultaneously for all pairs $(l,r)\neq (l_0,r_0)$, with $l_0\leq l \leq L_0$, $r_0\leq r\leq R_0$  we will have
\[
\pml_{l,r}(x_0^{(1:n)}| x_{\Delta_{l,r}}^{(1:n)}) \; <\;  \pml_{l_0,r_0}(x_0^{(1:n)}| x_{\Delta_{l_0,r_0}}^{(1:n)})
\]
eventually almost surely as $n\to\infty$.

Observe  that
\begin{align}\label{eq:pml}
\pml_{l_0,r_0}(&x_0^{(1:n)}| x_{\Delta_{l_0,r_0}}^{(1:n)}) - \pml_{l,r}(x_0^{(1:n)}| x_{\Delta_{l,r}}^{(1:n)})
\; = \notag\\
&c\,(|A|^{l+r} - |A|^{l_0+r_0})\log n \; - \sum_{wav\in A^{l+r+1}} N(wav) \log \frac{\hat p(a|w,v)}{\hat p(a|w_{-l_0:-1},v_{1:r_0})}\,,
 \end{align}
 where by an abuse of notation we write $N(wav) = N_n(w,a,v)$ and $\hat p(a|w,v) = \hat p_n(a|w,v)$.  As these empirical  probabilities are the maximum likelihood  estimators we have that
\begin{align*}
  \sum_{wav\in A^{l+r+1}} N(wav) \log \hat p(a|w_{-l_0:-1},v_{1:r_0}) \;&\geq\; \sum_{wav\in A^{l+r+1}} N(wav) p(a|w_{-l_0:-1},v_{1:r_0})\\
 &= \;  \sum_{wav\in A^{l+r+1}} N(wav) p(a|w,v)\,.
\end{align*}
Therefore, (\ref{eq:pml}) can be upper-bounded by
\[
c\, \Bigl(1 - \frac{1}{|A|}\Bigr)|A|^{l+r}\log n\; -\sum_{wav\in A^{l+r+1}} N(wav) \log \frac{\hat p(a|w,v)}{p(a|w,v)}\,.
\]
Now observe that
\begin{align*}
\sum_{wav\in A^{l+r+1}} N(wav) \log \frac{\hat p(a|w,v)}{p(a|w,v)}
 =  \sum_{w\cdot v\in A^{l+r}} N(w\!\cdot\! v)D(\hat p(\cdot|w,v) \,;\,p(\cdot|w,v) )\,,
\end{align*}
where $D$ denotes the \emph{Kullback-Leibler divergence} (see Definition~\ref{KL} in the the Appendix). Therefore, by Lemma \ref{KLTV} and Proposition~\ref{prop:typicality} we have   that for any $\delta >0$
\begin{align*}
\sum_{w\cdot v\in A^{l+r}} N(w\!\cdot\! v)&D(\hat p(\cdot|w,v) \,;\,p(\cdot|w,v) ) \\
&\leq\; \sum_{w\cdot v\in A^{l+r}} N(w\! \cdot \! v) \sum_{a\in A} \frac{[\,\hat p(a|w,v) - p(a|w,v)\,]^2}{p(a|w,v)}\\
&\leq\; \sum_{w\cdot v\in A^{l+r}} N(w\!\cdot \! v) \sum_{a\in A} \frac{\delta\log n }{N(w\!\cdot\! v) p(a|w,v)}\\
&\leq \;  \frac{\delta |A|^{l+r+1}\log n}{p_{\min}}\,,
\end{align*}
eventually almost surely as $n\to\infty$, where
\[
p_{\min}\;=\; \min\{\,p(a|w,v)\colon p(a|w,v)>0, a\in A, w\in A^{l_0}, v\in A^{r_0}\,\}\,.
\]
Then if we take $\delta< c \,p_{\min}(|A|-1)/|A|^2$ we have that  eventually almost surely as $n\to\infty$
\[
\pml_{l_0,r_0}(x_0^{(1:n)}| x_{\Delta_{l_0,r_0}}^{(1:n)})  \; >\; \pml_{l,r}(x_0^{(1:n)}| x_{\Delta_{l,r}}^{(1:n)})
\]
simultaneously for all pairs
$(l,r)\neq (l_0,r_0)$, with $l_0\leq l \leq L_0$, $r_0\leq r\leq R_0$.
This completes the proof of part (a).

\noindent\emph{(b) Underestimation.}
We have to prove that simultaneously for all pairs $(l,r)$ with $l<l_0$ or $r<r_0$
 we have
\[
\pml_{l_0,r_0}(x_0^{(1:n)}| x_{\Delta_{l_0,r_0}}^{(1:n)})  \; >\; \pml_{l,r}(x_0^{(1:n)}| x_{\Delta_{l,r}}^{(1:n)})
\]
eventually almost surely as $n\to\infty$.

First consider the case $l\leq l_0$ and $r\leq r_0$.  In this case we have that
\begin{align*}\label{eq:pml_under}
\pml_{l_0,r_0}(x_0^{(1:n)}| &x_{\Delta_{l_0,r_0}}^{(1:n)}) - \pml_{l,r}(x_0^{(1:n)}| x_{\Delta_{l,r}}^{(1:n)})
\; = \notag\\
&\sum_{wav\in A^{l_0+r_0+1}} N(wav) \log \frac{\hat p(a|w,v)}{\hat p(a|w_{-l:-1},v_{1:r})} - c\,(|A|^{l_0+r_0} - |A|^{l+r})\log n \\
=\;& n \;\biggl[\; \sum_{wav\in A^{l_0+r_0+1}} \frac{N(wav)}{n} \log \frac{\hat p(a|w,v)}{\hat p(a|w_{-l:-1},v_{1:r})} - c\,(|A|^{l_0+r_0} - |A|^{l+r}) \frac{\log n}{n}\Biggr] \,.
 \end{align*}
By the Strong Law of Large Numbers we have that
\begin{align*}
 \sum_{wav\in A^{l_0+r_0+1}} \frac{N(wav)}{n} \log &\frac{\hat p(a|w,v)}{\hat p(a|w_{-l:-1},v_{1:r})} \\
\longrightarrow\; &\sum_{wav\in A^{l_0+r_0+1}} p(wav) \log \frac{p(a|w,v)}{p(a|w_{-l:-1},v_{1:r})}
\end{align*}
almost surely as $n\to\infty$, where $p(wav) = \P(X_0=a, X_{\Delta_{l_0,r_0}}=wv)$. The Log-Sum inequality and the minimality of $(l_0,r_0)$ (see Remark~\ref{min})
implies that
\[
\sum_{wav\in A^{l_0+r_0+1}} p(wav) \log \frac{p(a|wv)}{p(a|w_{-l:-1},v_{1:r})} \;>\; 0
\]
As the number of pairs $(l,r) \neq (l_0,r_0)$ satisfying $l\leq l_0$ and $r\leq r_0$ is finite, this implies
that simultaneously for all such pairs we will have
\[
\pml_{l_0,r_0}(x_0^{(1:n)}| x_{\Delta_{l_0,r_0}}^{(1:n)})  \; >\; \pml_{l,r}(x_0^{(1:n)}| x_{\Delta_{l,r}}^{(1:n)})
\]
eventually almost surely as $n\to\infty$.

Now consider the case $(l,r)$ with $l_0 < l \leq L_0$ and $r<r_0$. We will prove that simultaneously for all $l_0< l \leq L_0$,
\[
\pml_{l,r}(x_0^{(1:n)}| x_{\Delta_{l,r}}^{(1:n)})  \;<\; \pml_{l_0,r}(x_0^{(1:n)}| x_{\Delta_{l_0,r}}^{(1:n)})
\]
eventually almost surely as $n\to\infty$.
The proof of this fact follows the same arguments of part (a), by observing that
 \begin{align*}
 \pml_{l_0,r}(x_0^{(1:n)}| x_{\Delta_{l_0,r}}^{(1:n)}) -  \pml_{l,r}(x_0^{(1:n)}| x_{\Delta_{l,r}}^{(1:n)})
\; \geq \;&c\, \Bigl(1 - \frac{1}{|A|}\Bigr)|A|^{l+r}\log n\\
 & -\sum_{wav\in A^{l+r+1}} N(wav) \log \frac{\hat p(a|w,v)}{p(a|w,v)}\\
 \;\geq \; & \Bigl[c\, \Bigl(1 - \frac{1}{|A|}\Bigr) -\frac{\delta |A|}{p_{\min}}\Bigr]   |A|^{l+r} \log n\\
 \;>\; &0
 \end{align*}
by Proposition~\ref{prop:typicality}, for a sufficiently small $\delta$, eventually almost surely as $n\to\infty$
and simultaneously for all $l_0< l \leq L_0$. This fact, combined with what was proved for the pair $(l_0,r)$ before
implies that
\[
\pml_{l,r}(x_0^{(1:n)}| x_{\Delta_{l,r}}^{(1:n)})  \;< \;\pml_{l_0,r_0}(x_0^{(1:n)}| x_{\Delta_{l_0,r_0}}^{(1:n)})
\]
 eventually almost surely as $n\to\infty$, simultaneously for all $r< r_0$ and $l_0\leq l \leq L_0$.

By observing that the same proof applies to the case $(l,r)$ with $l< l_0$ and $r_0<r\leq R_0$, this finishes the proof of part (b).

\end{proof}

%%%%%%%%%%%%%%%%%%%%%%%%%%%%%%%%%%%%%%%%%%%%%%%%%%%%
\section*{Discussion}
%%%%%%%%%%%%%%%%%%%%%%%%%%%%%%%%%%%%%%%%%%%%%%%%%%%%

In this paper we presented a method based on the theory of variable range Markov random fields to estimate the extent of dependence among \snps allowing variable windows along the genome. We proposed an estimator based on the idea of penalized maximum likelihood to find the conditional dependence neighborhood of each \snp in the sample and we proved its consistency. A major advantage of our method is that it is adaptive for the extent of dependencies among \snps and it is not necessary to specify a window size to capture the dependency pattern which is a problem in most sliding-windows approaches.

%We estimate the extent of dependence among \snps, but did not explicitly test the significance of it.
The core of our method is to find the basic neighborhoods of long sequences of \snps, without taking into account the disease status, and then to test the association of each independent block with the disease. Therefore, the method can be used for association with many different types of trait data, such as quantitative traits. It could also be applied to other platform types, like multiallelic markers  (e.g. microsatellites), as well as to other data sequences, like nucleotide or aminoacid sequences.  
In our analysis we considered genotypic data sequences, on the level of individuals, but another option is to use haplotype data, on the level of chromosomes, by phasing allele data. The challenge in the latter is to estimate  the phase of the data, but good haplotype-phasing computer programs are now available. The flexibility of our method to handle genotype or haplotype data may be useful to assess different disease models.

A reasonably large sample size is required to attain consistency of our neighborhood estimator. Rare long \snps blocks, which may be present in the population, are expected to be observed in low frequency and may bias the findings. In this direction, an open question is how to obtain lower bounds for the sample size to guarantee a given level of precision for the neighborhood estimator.
Beyond the sample size, the results are dependent of the density of the \snps covering the genome and also of the size of the alphabet being modeled. For biallelic markers, as \snps, these problems become less severe.

%We applied our methodology to analyze GAW16 data. Considering a total of 501,464 \snp genotype data from the 22 autosomal chromosomes were obtained 48,697 influence windows (mean size was 10.27 \snps, and standard deviation of the sizes was 5.94 \snps). The association results of the influence windows with the disease showed high signal in the sixth chromosome, where 22 windows had significant $p$-value. These results are compatible with previous studies about rheumatoid arthritis, as for example \cite{irigoyen2005,amos2009}. Some windows with small p-values, which can be associated with the disease, were also observed in other chromosomes. A list of the influence windows that presented p-values smaller than $10^{-4}$ as well as the program written in C to perform the analysis are available and can be requested from the authors.

%%%%%%%%%%%%%%%%%%%%%%%%%%%%%%%%%%%%%%%%%%%%%%%%%%%%%
\section*{Acknowledgments}
%%%%%%%%%%%%%%%%%%%%%%%%%%%%%%%%%%%%%%%%%%%%%%%%%%%%%

This work is part of USP's project \emph{Mathematics, computation, language and the brain} (no 11.1.9367.1.5)
and CNPq project \emph{Stochastic modeling of the brain activity} (no 480108/2012-9) .
The data analyzed in this work was gathered with the support of grants from the
National Institutes of Health (NO1-AR-2-2263 and RO1-AR-44422), and the National
Arthritis Foundation. The Genetic Analysis Workshop was supported by NIH grant R01
GM031575 from the National Institute of General Medical Sciences.
F.L. is partially supported by a CNPq fellowship (grant 302162/2009-7).

\bibliographystyle{dcu}
\bibliography{./references}

\end{document}